\newtheorem{theorem}{Theorem}
\newtheorem{lemma}[theorem]{Lemma}
\newcommand{\ud}{\,\text{d}}
\newcommand{\Rs}{\mathcal{R}^s_F}
\newcommand{\Rh}{\mathcal{R}^h_F}
\newcommand{\Aux}{\mathcal{A}_F}
\newcommand{\Q}{\mathcal{Q}_F}
\newcommand{\V}{\mathcal{V}_F}
\newcommand{\R}{\mathcal{R}^*_F}
\newcommand{\Rone}{\mathcal{R}^1_F}
\newcommand{\Rtwo}{\mathcal{R}^2_F}
\newcommand{\E}{\mathbb{E}}
\newcommand{\one}{\mathbb{I}}
\newcommand{\vv}{\mathbf{v}}
\newcommand{\x}{\mathbf{x}}
\newcommand{\p}{\mathbf{p}}
\newcommand{\bb}{\mathbf{b}}
\newcommand{\hatv}{\widehat{v}}
\newcommand{\eps}{\varepsilon}
\newcommand{\shortcite}[1]{\citeyearpar{#1}}
\newcommand{\moc}{M_{oc}}
\newcommand{\mpi}{M_{pi}}
\renewcommand{\cite}{\citep}
\title{Selling an item among a strategic bidder and a profiled agent}
\author{
\textbf{Ioannis Caragiannis}\\
\small{Aarhus University, Denmark}\\
\small{iannis@cs.au.dk}
\newline
\and
\textbf{Georgios Kalantzis}\\
\small{University of Edinburgh, United Kingdom}\\
\small{G.Kalantzis@sms.ed.ac.uk}
}
\date{}
\begin{document}

\maketitle

\begin{abstract}
We consider the fundamental scenario where a single item is to be sold to one of two agents. Both agents draw their valuation for the item from the same probability distribution. However, only one of them submits a bid to the mechanism. The other agent is {\em profiled}, i.e., the mechanism receives a {\em prediction} for her valuation, which can be true or false. Our goal is to design mechanisms for selling the item that make as much revenue as possible in cases of a correct or incorrect prediction. As a benchmark for proving our revenue-approximation guarantees, we use the maximum expected revenue that can be obtained by a strategic and an honest bidder. We study two mechanisms. The first one yields optimal revenue when the prediction is guaranteed to be correct and a constant revenue approximation when the prediction is incorrect, assuming that the agent valuations are drawn from a monotone hazard rate (MHR) distribution. The second mechanism ignores the prediction for the second agent and simulates the revenue-optimal mechanism when no bid information for the bidders is available. We prove, again assuming that valuations are drawn from MHR distributions, that this mechanism achieves a constant revenue approximation guarantee compared to our revenue benchmark. The MHR assumption is necessary; we show that there are non-MHR but regular probability distributions for which no constant approximation of our revenue benchmark is possible.
\end{abstract}

\section{Introduction}\label{sec:intro}

Traditionally, selling an item is approached in one of two ways. The first is to use an auction: potential buyers submit their bids to an auctioneer, who decides who should get the item and at which price. Typically, the price depends on the bid of all potential buyers. This approach leverages competition to maximize revenue. The second is to use a fixed price, often determined using statistical information about the potential buyers. Personalized information may differentiate the price (e.g., via discounts to specific buyers). Here, achieving high revenue due to the simplicity of pricing is the goal.

In this paper, we study a {\em hybrid} setting that combines elements of the two approaches. We consider two potential buyers of a single available item: one agent who submits a bid and another agent who responds to a fixed price. The first acts strategically, aiming to buy the item at the lowest price possible and maximizing her utility. The second agent has no expressive power herself; a profiling service provides information about her preference. How should we design high-revenue mechanisms for selling an item in such an environment?

More specifically, our model assumes that both potential buyers draw their valuation for the item independently from a common probability distribution. The first is a {\em strategic bidder} and mechanisms need to ensure her truthful behavior (i.e., so that her bid is equal to her true valuation for the item). The second is a {\em profiled agent}. This means that the mechanism is just given a {\em prediction} of her valuation for the item. This prediction can be correct, i.e., equal to the true valuation of the profiled agent, or incorrect. Our goal is to design mechanisms that make as high a revenue as possible in this Bayesian setting, regardless of whether the prediction is correct or not.

What does high revenue mean in our setting? One approach could be to assume that both agents are strategic and use classical auction theory to compute the optimal revenue as the maximum virtual social welfare~\cite{myerson1981optimal}. However, this estimate can be very conservative. As mentioned above, the profiled agent has no expressive power and, hence, cannot affect the outcome of the mechanism. Furthermore, in the case the prediction of her valuation is correct, it is like having a strategic and an honest bidder. The maximum expected revenue that can be extracted by a strategic and an honest bidder will be our revenue benchmark in this work. We would like to design mechanisms which achieve as high a fraction of the revenue benchmark as possible, regardless of whether the prediction is correct or not. This is an ambitious goal that goes beyond traditional auction design.

Borrowing our terminology from the recent trend of {\em learning-augmented} algorithms (or algorithms with predictions; see, e.g., the survey by~\citealt{mitzenmacher2022algorithms}), we will use the terms {\em consistency} and {\em robustness} to evaluate our mechanisms. Both terms quantify how far from the revenue benchmark the revenue extracted by a mechanism is. Consistency is the ratio of the revenue benchmark over the revenue extracted by the mechanism when given a correct prediction of the profiled agent valuation. Robustness is the ratio of the revenue benchmark over the worst-possible revenue that is extracted by the mechanism which utilizes an incorrect prediction. 

We specifically study two mechanisms. The first one is the {\em optimal-consistency mechanism} $\moc$. Here, we can assume that the mechanism receives as input a possibly strategic bid by the strategic bidder and an honest bid by the profiled one. Myerson's theory tells us that the maximum expected revenue we can extract from the strategic bidder is her expected virtual valuation. However, from the honest bidder, we can extract the entire valuation as revenue simply by asking the agent to pay her honest bid. Hence, the mechanism $\moc$ is defined as follows. If the virtual bid of the strategic bidder is higher than the prediction of the valuation for the profiled agent, the strategic bidder gets the item paying the inverse virtual value of the prediction. Otherwise, the profiled agent gets the item, paying her predicted (true) value. Notice that in case of a true prediction, the item is always sold. However, if the true valuation of the profiled agent is smaller than the prediction, buying the item would result in negative utility. To bound the robustness of mechanism $\moc$, we pessimistically assume that no revenue is ever extracted from the profiled agent when the prediction is incorrect.

Our second mechanism $\mpi$ is designed with this pessimistic scenario in mind and completely ignores the prediction. Instead, to decide the price offered to the profiled agent, the information about the probability distribution of valuations (and predictions) is used. Hence, the mechanism $\mpi$ is designed using a backward induction argument. If the item is to be sold to the profiled agent, the price will be set so that its virtual value is $0$; by definition, this maximizes the expected revenue of the profiled agent if she is offered the item. Then, to maximize expected revenue, the price offered to the first agent should be such that she gets the item when her virtual valuation is higher than the price offered to the profiled agent. Since this mechanism ignores the prediction, it has the same consistency and robustness.

Even in the very simple setting we consider, the revenue analysis is highly non-trivial. Even though we get closed-form expressions for the revenue benchmark, computing the worst-case robustness bound among a large class of distributions is very challenging. Specifically for mechanism $\mpi$, where a close-form expression of its expected revenue is available as well, we cannot compare its expected revenue directly to the revenue benchmark. Instead, since we know that this is the revenue-maximizing mechanism that ignores the prediction, we use other prediction-ignoring mechanisms as proxies to bound the robustness of mechanism $\mpi$.

In our technical results, we assume that the valuations are drawn from monotone hazard rate (MHR) probability distributions. This assumption is necessary since otherwise the robustness of any mechanism can be unbounded. For the optimal-consistency mechanism, we show a constant robustness bound of at most $10.873$ for any MHR distribution. Our best robustness bound is $2.575$. It turns out that these bounds are worst-case values of functions that depend on a parameter of the distribution. We present trade-offs showing that the robustness of the optimal-consistency mechanism is considerably lower than $10.873$ for distributions of best robustness close to $2.575$. Our analysis exploits the properties of MHR probability distributions, including a technical lemma of~\citet{dhangwatnotai2010revenue}, which relates the expected valuation and the revenue extracted from such distributions.

The rest of the paper is structured as follows. We present a survey of the related literature in the rest of this section. Details on Myerson's theory that will be useful in our analysis are then given in Section~\ref{sec:prelim}. Important quantities for our analysis as well as a key technical lemma follows in Section~\ref{sec:quantities}. The analysis of the optimal-consistency mechanism is given in Section~\ref{sec:opt-consistency} and the robustness of our prediction-ignoring mechanism follows in Section~\ref{sec:ignore-prediction}. The robustness trade-off between the two mechanisms is presented in Section~\ref{sec:tradeoff}. We conclude with open problems in Section~\ref{sec:open}.

\paragraph{Related work.}
Learning-augmented algorithms have been in the epicenter of the recent algorithmic research literature, with numerous related contributions. Many classical problems have been reconsidered, and new algorithms, enhanced with (possibly erroneous) machine-learned predictions about their input, are designed and analyzed with respect to the achievable consistency and robustness. Representative problem domains include data structures, online and approximation algorithms for combinatorial optimization, streaming and sublinear algorithms, and many more. For a more detailed exposition see the early survey by Mitzenmacher and
Vassilvitskii~\shortcite{mitzenmacher2022algorithms} and the online repository \url{https://algorithms-with-predictions.github.io/}.

In algorithmic game theory and computational social choice, the concept of prediction has been considered for problems related to mechanism design \cite{agrawal2022learning,BGT23,balkanski2023online,colini2024trust,christodoulou2024mechanism,balkanski2024randomized,QNS24,amanatidis2025online}, the price of anarchy of cost sharing \cite{gkatzelis2022improved,christodoulou2025improving}, and the distortion of voting \cite{berger2023optimal,filos2025utilitarian}. More closely to our work, the prediction framework has been considered in auction environments. In particular, Xu and Lu \shortcite{XL22} were the first to study revenue-maximizing auctions with the challenging benchmark of the highest agent valuation, while Caragiannis and Kalantzis \shortcite{caragiannis2024randomized} developed randomized auctions in the same setting. On the other hand, Lu et al. \shortcite{lu2024competitive} studied competitive auctions and digital goods with predictions and Gkatzelis et al. \shortcite{gkatzelis2025clock} studied clock auctions with unreliable advice.

Our decision of using a revenue benchmark in order to evaluate the efficiency of our mechanisms is similar in spirit with the framework of competitive auctions, initiated with the work of Goldberg et al. \shortcite{goldberg2006competitive} and adopted by Lu et al. \shortcite{lu2024competitive}. Finally, close to our work is the study of revenue-maximizing auctions with a single sample by Dhangwatnotai et al. \shortcite{dhangwatnotai2010revenue}, where they initiated the study of \textit{random reserve} prices. More generally, the field of Bayesian mechanism design, which uses extensively statistical information about the agent valuations, is surveyed by Hartline \shortcite{hartline2013mechanism}.

\section{Preliminaries}\label{sec:prelim}
We begin by presenting the auction basics that will be useful in our study. Further details can be found in classical textbooks and auction theory literature, e.g. see \cite{roughgarden2016twenty,hartline2013mechanism} and \cite[Chapter 5]{krishna2009auction}. 

In the most standard setting of single-item auctions, there is a set of $n$ potential buyers (or {\em agents}) and a single item for sale. Each agent has a private valuation $v_i$ for the item. An {\em auction mechanism} takes as input a bid from each agent (or {\em bidder}) and decides who should get the item and at which price. In particular, the mechanism is identified by a pair $(\x,\p)$ of an allocation and a payment function. Both functions have $n$ components. For $i\in [n]$, we denote by $x_i(\bb)$ the indicator variable denoting whether the item is given to agen $i$ when the bids submitted by the agents form the bid vector $\bb$. A natural constraint for the single-item setting is that $\sum_{i\in [n]}{x_i(\bb)}\leq 1$, i.e., at most one potential buyer can get the item.

An important desideratum in the auction literature is truthfulness. In his seminal paper, \citet{myerson1981optimal} proved that a mechanism is truthful if and only if the allocation function is monotone. Then, under mild normalization assumptions, there is a unique payment rule $\p$ which, together with the allocation rule $\x$ define a truthful mechanism. In the single-item setting, the potential buyer who gets the item is the only one who pays the mechanism and her payment is equal to the minimum bid that would still give the agent the item, assuming that the other agents keep their bids.

An important goal in auction design is {\em revenue maximization}. The classical {\em Bayesian} setting is the most suitable to achieve this goal. Here, each agent $i$ draws her valuation for the item from a known probability distribution $F_i$, independently on the other agents. It turns out that the expected revenue that a truthful mechanism extracts from agent $i$ (i.e., the expected payment that the agent submits to the auctioneer) is equal to the expected {\em virtual valuation} of the agent when she gets the item. Formally, for every agent $i$ and every vector $\vv_{-i}$ of valuations by the other agents, 
\begin{align*}
\E_{v_i\sim F_i}[p_i(\vv)] &= \E_{v_i\sim F_i}[\phi_i(v_i)\cdot x_i(\vv)].
\end{align*}
The virtual valuation $\phi_i(\cdot)$ for agent $i$ is defined as $\phi_i(z)=z-\frac{1-F_i(z)}{f_i(z)}$, where $F_i$ and $f_i$ are the cumulative distribution function and probability density function of the probability distribution from which agent $i$ draws her valuation.

Then, to design an auction that maximizes the expected revenue, we equivalently have to optimize the expected virtual welfare
\begin{align*}
\E\left[\sum_{i\in [n]}{\phi_i(v_i)\cdot x_i(\vv)}\right],
\end{align*}
where $\vv=(v_1, ..., v_n)$ and the expectation is taken over $v_i\sim F_i$ for $i\in [n]$. Assuming regularity of the probability distributions, which implies that $\phi_i(z)$ is non-decreasing in $z$, the revenue-maximizing auction is defined as follows. If all virtual bids are negative, no agent gets the item. Otherwise, the item is given to the agent with the highest virtual bid at a price whose virtual value is equal to either $0$ or to the second highest virtual bid, whichever is higher. This is nothing more than maximizing the sum above, under the constraint that at most one of the $x_i(
v)$ indicator variables can be equal to $1$.

Our setting has only two agents $s$ and $h$. Both draw their valuation for the item independently from a common probability distribution $F$. Define the {\em hazard rate} of $F$ as $h(z)=\frac{f(z)}{1-F(z)}$. We will be assuming that $h(z)$ is non-decreasing in $z$ and will refer to $F$ as a {\em monotone hazard rate} (MHR) probability distribution. For example, the uniform and the exponential distributions are MHR. Any MHR distribution is {\em regular}, having non-decreasing virtual value.

Mechanisms for allocating the item interact differently with the two agents. The agent $s$ submits a bid $b_s$ to the mechanism. For agent $h$, we are given a prediction $\hatv$ of her valuation. The prediction follows the distribution $F$, can be correlated to the true valuation $v_h$ of agent $h$, but is independent on the valuation $v_s$ of agent $s$. The bid $b_s$ and the prediction $\hatv$ form the input to the mechanism. A mechanism is defined by the allocation function $\x=(x_s,x_h)$ so that $x_s(b_s,\hatv),x_h(b_s,\hatv)\in \{0,1\}$ and $x_s(b_s,\hatv)+x_h(b_s,\hatv)\leq 1$.

The arguments discussed above for standard single-item auctions hold for the strategic bidder $s$ as well. To ensure her truthful behavior, it is necessary and sufficient that the allocation function $x_s(z,\hatv)$ is non-decreasing with $z$. Then, the payment $p_s(b_s,\hatv)$ agent $s$ should submit to the mechanism when she gets the item (i.e., $x_s(b_s,\hatv)=1$) is equal to the minimum value $b$ so that $x_s(b,\hatv)=1$. Then, the expected revenue that is extracted by agent $s$ is her expected virtual valuation when she gets the item.

The prediction $\hatv$ is not provided by agent $h$ and, hence, there is no truthfulness issue for her. The mechanism can decide to offer the item to agent $h$ at some price and agent $h$ will decide to buy and pay the price or ignore the offer, depending on whether the price is below or above her true valuation $v_h$. The ideal scenario is when the prediction is correct, i.e., $\hatv$ is always equal to the value $v_h$ agent $h$ draws from $F$. In this case, agent $h$ would always accept a price equal to $\hatv$.

We are ready to design the revenue-optimal mechanism for our setting when the prediction is true. We will refer to this mechanism as the {\em optimal-consistency mechanism} $\moc$. This is essentially the revenue-optimal auction for a strategic and an honest bidder. All we have to do is to maximize the quantity
\begin{align*}
    \E_{v_s,v_h\sim F}\left[x_s(b_s,\hatv)\cdot \phi(b_s)+x_h(b_s,\hatv)\cdot \hatv\right]
\end{align*}
Hence, mechanism $\moc$ works as follows. If the virtual bid $\phi(b_s)$ of agent $s$ is no less than the prediction, agent $s$ gets the item at the price $\phi^{-1}(\hatv)$ (this is the smallest value with virtual value not smaller than $\hatv$). Otherwise, bidder $h$ gets the item at the price $\hatv$.

In case the prediction is incorrect, agent $h$ may reject the offer of the item at the price $\hatv$. Then, the whole second term in the expression of the expected revenue above can be lost. We will bound the {\em robustness} of the optimal-consistency mechanism $\moc$ by bounding the ratio of the expectation above over the expectation of the first among the two terms only.

So, mechanism $\moc$ achieves the maximum expected revenue that can be extracted by a strategic and an honest bidder if the prediction is correct but considerably lower revenue if the prediction is incorrect. It seems reasonable then to ignore the prediction entirely and, since there is no other information available, to rely on the statistical information available for the profiled agent. Then, if agent $h$ is offered the item, this should be done at the price $\phi^{-1}(0)$, which maximizes the expression $z(1-F(z))$ denoting the expected revenue extracted from the agent who draws their valuation from distribution $F$ and is offered the item at a price of $z$. Let us define $\R=\phi^{-1}(0)\cdot (1-F(\phi^{-1}(0)))$ to be the optimal expected revenue that can be extracted by agent $h$ with a fixed price. Then, we can verify that agent $s$ should be offered the item at the price of $\phi^{-1}(\R)$ for a total expected revenue of
\begin{align}\label{eq:rev-of-mpi}
    \phi^{-1}(\R)\cdot (1-F(\phi^{-1}(\R)))+F(\phi^{-1}(\R))\cdot \R.
\end{align}
We will refer to this revenue-optimal mechanism among the mechanisms that ignore the prediction as mechanism $\mpi$. To bound its robustness, we need to bound the ratio of the maximum expected revenue for a strategic and an honest bidder over the expression in (\ref{eq:rev-of-mpi}). Even though both quantities have closed-form expressions, comparing them will be possible only indirectly.

\section{Analysis warm-up}\label{sec:quantities}
In our analysis for mechanisms $\moc$ and $\mpi$, we use several quantities associated with a probability distribution $F$. The first one is simply the expected value $\V$ drawn from $F$:
\begin{align*}
    \V&=\int_0^{\infty}{z\cdot f(z)\ud{z}}.
\end{align*}
The next quantity is the expected revenue $\Q$ extracted from a single agent who draws her valuation from $F$ when she is offered the item at an independently selected random price also selected from $F$:
\begin{align*}    
    \Q&=\int_0^{\infty}{z\cdot (1-F(z))\cdot f(z)\ud{z}}.
\end{align*}
In our analysis, we will also use the auxiliary quantity $\Aux$ defined as 
\begin{align*}
        \Aux&=\int_0^{\infty}{\left(\phi^{-1}(z)-z\right)\cdot\left(1-F(\phi^{-1}(z))\right)\cdot f(z)\ud{z}}.
\end{align*}
We also define $\chi_F=\Aux/\V$.

We will use the following statement proved by \citet{dhangwatnotai2010revenue}, which relates the quantities defined above; recall that $\R$ is the optimal expected revenue that can be extracted by
a single agent.
\begin{lemma}[\citealt{dhangwatnotai2010revenue}]\label{lem:r-q-vs-v}
    Let $F$ be a monotone hazard rate probability distribution returning non-negative values. Then $\Q \geq \V/4$ and $\R\geq \V/e$.
\end{lemma}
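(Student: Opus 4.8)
The plan is to express everything through the survival function $G(z)=1-F(z)$ and the cumulative hazard $\psi(z)=\int_0^z h(t)\ud{t}$, so that $G(z)=e^{-\psi(z)}$ with $\psi(0)=0$. The MHR assumption is exactly the statement that $\psi'=h$ is non-decreasing, i.e.\ that $\psi$ is convex (equivalently, $G$ is log-concave); this is the single structural fact I expect to drive both inequalities. Two preliminary identities set things up. First, integrating the tail of a non-negative valuation gives $\V=\int_0^\infty G(z)\ud{z}$. Second, writing $G(z)f(z)=-\tfrac12\bigl(G^2\bigr)'(z)$ and integrating by parts turns the definition of $\Q$ into $\Q=\tfrac12\int_0^\infty G(z)^2\ud{z}$, where the boundary terms vanish because $G$ decays exponentially for MHR distributions. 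Thus proving $\Q\ge\V/4$ reduces to showing $\int_0^\infty G^2\ud{z}\ge\tfrac12\int_0^\infty G\ud{z}$.

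To get this last inequality I would exploit that a convex $\psi$ with $\psi(0)=0$ has non-decreasing secant slope $\psi(z)/z$, which gives the superadditivity $\psi(2z)\ge 2\psi(z)$ and hence $G(z)^2=e^{-2\psi(z)}\ge e^{-\psi(2z)}=G(2z)$ for every $z$. Integrating and substituting $w=2z$ yields $\int_0^\infty G(z)^2\ud{z}\ge\int_0^\infty G(2z)\ud{z}=\tfrac12\int_0^\infty G(w)\ud{w}=\tfrac12\V$, and combined with $\Q=\tfrac12\int_0^\infty G^2\ud{z}$ this is precisely $\Q\ge\V/4$.

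For $\R\ge\V/e$, the monopoly reserve $r=\phi^{-1}(0)$ satisfies $\phi(r)=0$, i.e.\ $h(r)=1/r$, and $\R=r\,G(r)$. Here I would use convexity of $\psi$ in the form of the tangent-line bound at $r$: $\psi(z)\ge\psi(r)+h(r)(z-r)=\psi(r)+(z-r)/r$ for all $z$. Exponentiating gives $G(z)\le e\,G(r)\,e^{-z/r}$, and integrating over $[0,\infty)$ bounds $\V=\int_0^\infty G(z)\ud{z}\le e\,r\,G(r)=e\,\R$, which rearranges to the claim. I expect the main obstacle to be the $\Q$ bound rather than the $\R$ bound: the reductions to $\tfrac12\int_0^\infty G^2\ud{z}$ and to the scale-comparison $G(z)^2\ge G(2z)$ are the steps where the convexity of $\psi$ must be deployed in exactly the right way, whereas once one views $r$ as the point where the tangent slope equals $1/r$, the $\R$ bound is a routine tangent-line estimate. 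A minor technical point to verify throughout is the behaviour at the left end of the support (whether $G(0)=1$ or the support begins above $0$), but since $\psi$ stays convex with $\psi(0)=0$ in either case, both arguments go through unchanged.
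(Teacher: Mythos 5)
Your proof is correct. Note that the paper itself gives no proof of this lemma --- it is imported verbatim from Dhangwatnotai et al.\ --- so there is nothing in-paper to compare against; your argument is essentially the standard one from that source, reducing both halves to the convexity of the cumulative hazard $\psi$ (equivalently, log-concavity of $G=1-F$): the identity $\Q=\tfrac12\int_0^\infty G^2$, the superadditivity bound $G(z)^2\ge G(2z)$, and the tangent-line estimate at the monopoly price $r=\phi^{-1}(0)$ (where $h(r)=1/r$) all check out, and the boundary term $zG(z)^2\to 0$ is justified since $\V=\int_0^\infty G<\infty$ and $G$ is non-increasing.
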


We proceed by proving the following technical lemma. It generalizes the well-known property that an MHR distributions returns a value above Myerson's reserve price with probability at least $1/e$, i.e., $1-F(\phi^{-1}(0))\geq 1/e$. 
\begin{lemma}\label{lem:quantiles}
Let $F$ be a monotone hazard rate probability distribution returning non-negative values, and $\phi$ the corresponding virtual valuation function. Then, for every $z\geq 0$, 
\begin{align*}
    F(\phi^{-1}(z)) &\leq 1 - \frac{1}{e}\cdot (1-F(z)).
\end{align*}
\end{lemma}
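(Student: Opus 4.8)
The plan is to work with the survival function $1-F$ and the hazard rate $h(z)=f(z)/(1-F(z))$, and to reduce the claimed inequality to a clean bound on the integral of the hazard rate. Fix $z\geq 0$ and set $w=\phi^{-1}(z)$, so that $\phi(w)=z$. Since $\phi(w)=w-\frac{1-F(w)}{f(w)}\leq w$, we immediately get $w\geq z$. Rearranging $\phi(w)=z$ gives the key identity
\[
w-z=\frac{1-F(w)}{f(w)}=\frac{1}{h(w)}.
\]

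First I would rewrite the target inequality $F(\phi^{-1}(z))\leq 1-\frac{1}{e}\,(1-F(z))$ in the equivalent survival form $1-F(w)\geq \frac{1}{e}\,(1-F(z))$, i.e. $\frac{1-F(w)}{1-F(z)}\geq \frac{1}{e}$. Then I would invoke the standard representation of a survival ratio as the exponential of a negative hazard integral,
\[
\frac{1-F(w)}{1-F(z)}=\exp\!\left(-\int_z^w h(t)\,\ud t\right),
\]
which follows from $\frac{\ud}{\ud t}\ln\!\left(1-F(t)\right)=-h(t)$. Hence it suffices to prove $\int_z^w h(t)\,\ud t\leq 1$.

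This is where the MHR assumption enters, and it is the heart of the argument. Since $h$ is non-decreasing, $h(t)\leq h(w)$ for every $t\in[z,w]$, so
\[
\int_z^w h(t)\,\ud t\leq h(w)\,(w-z)=h(w)\cdot\frac{1}{h(w)}=1,
\]
using the key identity above. Substituting back yields $\frac{1-F(w)}{1-F(z)}\geq e^{-1}$, which is exactly the claim. Taking $z=0$ (where $F(0)=0$ since $F$ returns non-negative values) recovers the well-known bound $1-F(\phi^{-1}(0))\geq 1/e$, confirming that the lemma is a genuine generalization.

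The chain of inequalities is short, so the only real care is at the boundaries. I would check that $w=\phi^{-1}(z)$ is well defined and finite for the relevant range of $z$, using the monotonicity of $\phi$ guaranteed by the MHR (hence regular) assumption, and interpret $\phi^{-1}(z)$ as the smallest value with virtual value at least $z$ as in the definition of $\moc$. I would also dispose of the degenerate cases $w=z$ or $f(w)=0$, in which the identity and the integral bound hold trivially or as limits. I expect the main obstacle to be expository rather than mathematical: stating the hazard-integral representation and the inversion $w=\phi^{-1}(z)$ cleanly over the support of $F$, since once the identity $w-z=1/h(w)$ is in hand the estimate is immediate.
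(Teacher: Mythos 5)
Your argument is correct and is essentially identical to the paper's own proof: both write the survival ratio $\frac{1-F(\phi^{-1}(z))}{1-F(z)}$ as $\exp\bigl(-\int_z^{\phi^{-1}(z)} h(t)\,\ud t\bigr)$ and bound the hazard integral by $h(\phi^{-1}(z))\cdot(\phi^{-1}(z)-z)=1$ using monotonicity of $h$ and the identity $\phi^{-1}(z)-z = 1/h(\phi^{-1}(z))$. No substantive differences to report.
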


\begin{proof}
Let $h$ be the hazard rate of $F$ and $H$ the cumulative hazard rate defined as $H(z)=\int_0^z{h(y)\ud{y}}$. Then, it can be verified that $1-F(z)=e^{-H(z)}$ for every $z\geq 0$. Hence,
\begin{align*}
    \frac{1-F(\phi^{-1}(z))}{1-F(z)} &= \frac{e^{-H(\phi^{-1}(z))}}{e^{-H(z)}} = \exp\left(\int_0^z{h(y)\ud{y}}-\int_0^{\phi^{-1}(z)}{h(y)\ud{y}}\right)\\
    &=\exp\left(-\int_z^{\phi^{-1}(z)}{h(y)\ud{y}}\right).
\end{align*}
The proof of the lemma will follow by upper-bounding the integral at the RHS of the above equation by $1$. Indeed, by the MHR assumption, we have that $h(y)$ is non-decreasing in $y$ and, hence, 
\begin{align*}
    \int_z^{\phi^{-1}(z)}{h(y)\ud{y}} &\leq h(\phi^{-1}(z))\cdot (\phi^{-1}(z)-z)=1.
\end{align*}
The last equality follows since, by the definition of the virtual valuation, $z=\phi(\phi^{-1}(z))=\phi^{-1}(z)-\frac{1}{h(\phi^{-1}(z))}$.
\end{proof}

\section{The optimal-consistency mechanism}\label{sec:opt-consistency}
We denote by $\Rs$ and $\Rh$ the expected revenue that mechanism $\moc$ extracts from the strategic and the honest bidder, respectively. Recall that mechanism $\moc$ sells to the strategic bidder when $\phi(v_s)\geq v_h$ and to the honest bidder when $\phi(v_s)<v_h$. In the first case, the mechanism charges the critical bid $\phi^{-1}(v_h)$ to the winning strategic bidder. In the second case, the mechanism charges the winning honest bidder her valuation. Hence, the expected revenue $\Rs$ extracted from the strategic bidder is
\begin{align}\nonumber
    \Rs &= \E_{v_h\sim F}\E_{v_s\sim F}[
    \phi^{-1}(v_h) \one \{\phi(v_s)\geq v_h\}]\\\nonumber
    &= \int_0^{\infty}{\left(\int_{\phi^{-1}(v_h)}^{\infty}{\phi^{-1}(v_h)f(v_s)\ud{v_s}}\right)\cdot f(v_h)\ud{v_h}}\\\label{eq:rev-strategic}
    &= \int_0^{\infty}{\phi^{-1}(z)\cdot(1-F(\phi^{-1}(z)))\cdot f(z)\ud{z}}
\end{align}
and the expected revenue $\Rh$ extracted from the honest bidder is
\begin{align}\nonumber
    \Rh &= \E_{v_h\sim F}\E_{v_s\sim F}[v_h\,\one\{v_h\geq \phi(v_s)\}]\\\nonumber
    &= \int_0^{\infty}{\left(\int_0^{\phi^{-1}(v_h)}{v_h\cdot  f(v_s)\ud{v_s}}\right)\cdot f(v_h)\ud{v_h}}\\\label{eq:rev-honest}
    &=\int_0^{\infty}{z\cdot F(\phi^{-1}(z))\cdot f(z)\ud{z}}.
\end{align}

\begin{lemma}\label{lem:optimal-rev}
    The expected revenue of mechanism $\moc$ when applied to a strategic and an honest bidder who draw their valuations independently according to the probability distribution $F$ is $\V\cdot (\chi_F+1)$.
\end{lemma}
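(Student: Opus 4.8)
The plan is to compute the total expected revenue of $\moc$, namely $\Rs + \Rh$, and show directly that it equals $\V\cdot(\chi_F+1)$. Since $\chi_F = \Aux/\V$ by definition, the target quantity rewrites cleanly as $\V\cdot(\chi_F+1) = \Aux + \V$, so it suffices to establish the identity $\Rs + \Rh = \Aux + \V$. Both sides are already available as explicit integrals: $\Rs$ and $\Rh$ are given in~(\ref{eq:rev-strategic}) and~(\ref{eq:rev-honest}), while $\Aux$ and $\V$ are defined in Section~\ref{sec:quantities}. So the whole argument reduces to matching integrands.

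First I would expand the definition of $\Aux$ by splitting the factor $(\phi^{-1}(z)-z)$ into two pieces:
\begin{align*}
    \Aux &= \int_0^{\infty}{\phi^{-1}(z)\cdot\left(1-F(\phi^{-1}(z))\right)\cdot f(z)\ud{z}} - \int_0^{\infty}{z\cdot\left(1-F(\phi^{-1}(z))\right)\cdot f(z)\ud{z}}.
\end{align*}
The first integral on the right is exactly the expression~(\ref{eq:rev-strategic}) for $\Rs$. This is the key recognition step, and it is where the auxiliary quantity $\Aux$ was evidently designed to align with the strategic revenue.

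Next I would add $\V$ to the remaining (negative) integral and simplify. Combining
\begin{align*}
    \V - \int_0^{\infty}{z\cdot\left(1-F(\phi^{-1}(z))\right)\cdot f(z)\ud{z}} &= \int_0^{\infty}{z\cdot\left[1-\left(1-F(\phi^{-1}(z))\right)\right]\cdot f(z)\ud{z}}\\
    &= \int_0^{\infty}{z\cdot F(\phi^{-1}(z))\cdot f(z)\ud{z}},
\end{align*}
which is precisely~(\ref{eq:rev-honest}), the expression for $\Rh$. Assembling the two recognitions gives $\Aux + \V = \Rs + \Rh$, which is the claim.

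I do not anticipate a genuine obstacle here: the statement is an exact algebraic identity among integrals that share a common kernel, and no inequality, MHR property, or limiting argument is needed. The only point demanding a little care is the sign bookkeeping when splitting $\Aux$ and the use of the complementary-probability identity $1-(1-F(\phi^{-1}(z)))=F(\phi^{-1}(z))$ to fold $\V$ together with the leftover term into $\Rh$. Everything else is a matter of reading off the previously established closed forms.
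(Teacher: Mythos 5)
Your proof is correct and is essentially the paper's own argument run in reverse: the paper starts from $\Rs+\Rh$ and regroups the integrands into $\Aux+\V$, while you start from $\Aux+\V$ and recombine to recover $\Rs$ and $\Rh$; the algebraic identity is identical. No gap.
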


\begin{proof}
    By the discussion above, the expected revenue of mechanism $\moc$ is $\Rs+\Rh$. Using inequalities (\ref{eq:rev-strategic}) and (\ref{eq:rev-honest}), and the definitions of quantities $\Aux$, $\V$, and $\chi_F$, we get
    \begin{align*}
        \Rs+\Rh &= \int_0^{\infty}{\phi^{-1}(z)\cdot(1-F(\phi^{-1}(z)))\cdot f(z)\ud{z}} +\int_0^{\infty}{z\cdot F(\phi^{-1}(z))\cdot f(z)\ud{z}}\\
        &=\int_0^{\infty}{\left(\phi^{-1}(z)-z\right)\cdot(1-F(\phi^{-1}(z)))\cdot f(z)\ud{z}} +\int_0^{\infty}{z\cdot f(z)\ud{z}}\\
        &=\Aux+\V=\V\cdot (\chi_F+1),
    \end{align*}
as desired.
\end{proof}

Next, consider the execution of mechanism $\moc$ that uses a prediction $\hatv$ for the valuation of the profiled agent instead of her true valuation $v_h$ submitted as an honest bid. If the prediction is correct, i.e., $\hatv=v_h$, the profiled agent can be thought of as an honest bidder and the prediction $\hatv$ as her honest bid $v_h$. If $\phi(v_s)\leq \hatv$, the mechanism will offer the item to the profiled bidder at the price of $\hatv$. If the prediction is correct, i.e., $\hatv=v_h$, the profiled bidder will accept to buy as her utility is non-negative. Obviously, the mechanism extracts optimal revenue in this case and has optimal consistency. If the prediction is incorrect, the profiled bidder will accept to buy only if her true valuation is not lower than the prediction, i.e., if $v_h\geq \hatv$. We use these observations together with the Equations (\ref{eq:rev-strategic}) and (\ref{eq:rev-honest}) above to prove the following statement.

\begin{lemma}\label{lem:rev-with-wrong-prediction}
    The expected revenue of mechanism $\moc$ when applied to a stratefic bidder and a profiled agent who draw their valuations from the monotone hazard rate probability distribution $F$ is at least $\V\cdot \left(\chi_F+\frac{1}{4e}\right)$.
\end{lemma}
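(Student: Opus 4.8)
The plan is to observe that the entire bound can be charged to the strategic bidder alone, so that the (possibly vanishing) revenue from the profiled agent never needs to be estimated. First I would note that the prediction $\hatv$ is drawn from $F$ independently of $v_s$, and that when agent $s$ wins she pays the critical bid $\phi^{-1}(\hatv)$, which depends only on $\hatv$. Hence the expected revenue extracted from the strategic bidder is exactly the quantity $\Rs$ already computed in Equation~(\ref{eq:rev-strategic}), regardless of whether the prediction is correct. Since the profiled agent either buys (contributing a non-negative amount) or rejects the offer (contributing nothing), the total expected revenue of $\moc$ is at least $\Rs$, even in the pessimistic worst case where an incorrect prediction always leads to rejection. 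It therefore suffices to prove $\Rs \geq \V\cdot\left(\chi_F+\frac{1}{4e}\right)$.

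Next I would decompose $\Rs$ so as to isolate the auxiliary quantity $\Aux$. Writing $\phi^{-1}(z)=z+(\phi^{-1}(z)-z)$ inside Equation~(\ref{eq:rev-strategic}) splits the integral as
\[
\Rs = \int_0^{\infty}\left(\phi^{-1}(z)-z\right)\left(1-F(\phi^{-1}(z))\right)f(z)\ud z + \int_0^{\infty}z\left(1-F(\phi^{-1}(z))\right)f(z)\ud z,
\]
where the first integral is, term for term, the definition of $\Aux=\V\chi_F$. Thus $\Rs = \V\chi_F + \int_0^{\infty}z\left(1-F(\phi^{-1}(z))\right)f(z)\ud z$, and the remaining task reduces to showing that this leftover integral is at least $\V/(4e)$.

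For the leftover integral I would invoke the two results already in hand. Lemma~\ref{lem:quantiles} supplies the pointwise inequality $1-F(\phi^{-1}(z))\geq \frac{1}{e}\left(1-F(z)\right)$, which lower-bounds the integral by $\frac{1}{e}\int_0^{\infty}z\left(1-F(z)\right)f(z)\ud z = \frac{1}{e}\Q$. Then Lemma~\ref{lem:r-q-vs-v} gives $\Q\geq \V/4$, so the leftover integral is at least $\V/(4e)$. Combining, $\Rs \geq \V\chi_F + \V/(4e) = \V\cdot\left(\chi_F+\frac{1}{4e}\right)$, and since the total revenue is at least $\Rs$, the claim follows.

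The main point here is conceptual rather than computational: recognizing that the worst-case guarantee survives entirely on the strategic bidder's payment lets the argument sidestep any reasoning about the joint distribution of $\hatv$ and $v_h$ or about how often the profiled agent accepts an incorrect price. The only quantitative steps are the two lemma applications, both of which are direct plug-ins, so there is no heavy calculation to carry out; the one detail worth verifying carefully is that the split of $\Rs$ reproduces the integrand of $\Aux$ verbatim, which it does.
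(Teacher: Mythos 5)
Your proof is correct and follows essentially the same route as the paper's: charge everything to the strategic bidder, split $\Rs$ to isolate $\Aux$, and bound the leftover integral via Lemma~\ref{lem:quantiles} and the bound $\Q\geq \V/4$ from Lemma~\ref{lem:r-q-vs-v}. If anything, you are slightly more explicit than the paper in citing Lemma~\ref{lem:quantiles} for the pointwise bound $1-F(\phi^{-1}(z))\geq \frac{1}{e}(1-F(z))$, which the paper's proof uses but does not name.
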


\begin{proof}
We will pessimistically assume that no revenue at all is extracted from the profiled bidder when the prediction is incorrect.\footnote{Notice that the revenue from the profiled bidder can indeed be very close to $0$. For example, imagine valuations drawn from the uniform distribution in $[0,1]$ and a correlated prediction defined as $\hatv=v_h+\eps$ when $v_h\in (0,1-\eps]$, $\hatv=v_h+\eps-1$ when $v_h\in (1-\eps,1]$, and $\hatv=0$ when $v_h=0$. The probability that the true valuation $v_h$ of the profiled agent is not lower than the prediction and, equivalently, the probability that she will accept to buy the item at a price of $\hatv$, is only $\eps$, which can become negligibly small.} Since the prediction $\hatv$ is independent on the valuation $v_s$ and (like $v_h$) follows the probability distribution $F$, and the item is allocated to the strategic bidder when $\phi(v_s)\geq \hatv$, the expected revenue the mechanism extracts from the strategic bidder is equal to the quantity $\Rs$ defined in Equation (\ref{eq:rev-strategic}). Using Equation (\ref{eq:rev-strategic}), the definitions of quantities $\Aux$, $\Q$, and $\chi_F$, the fact $\phi^{-1}(z)\geq z$, and the first inequality from Lemma~\ref{lem:r-q-vs-v}, we get
\begin{align*}
    \Rs &= \int_0^{\infty}{\phi^{-1}(z)\cdot(1-F(\phi^{-1}(z)))\cdot f(z)\ud{z}}\\
    &=\int_0^{\infty}{\left(\phi^{-1}(z)-z\right)\cdot(1-F(\phi^{-1}(z)))\cdot f(z)\ud{z}}+\int_0^{\infty}{z\cdot (1-F(\phi^{-1}(z))\cdot f(z)\ud{z}}\\
    &\geq \Aux +\frac{1}{e}\cdot \int_0^\infty{z\cdot (1-F(z))\cdot f(z)\ud{z}}\\
    &= \Aux+\frac{1}{e}\cdot \Q\geq \Aux+\frac{1}{4e}\cdot \V=\V\cdot \left(\chi_F+\frac{1}{4e}\right),
\end{align*}
as desired.
\end{proof}

\begin{theorem}\label{thm:robustness-of-m-star}
    The robustness of mechanism $\moc$ when applied on a strategic bidder and a profiled agent (with a possibly incorrect prediction) who draw their valuations from the monotone hazard rate probability distribution $F$ is at most $4e\approx 10.873$.
\end{theorem}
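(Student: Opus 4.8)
The plan is to assemble the two preceding lemmas into a single ratio and then optimize over the only free parameter, namely $\chi_F$. By definition, the robustness of $\moc$ is the ratio of the revenue benchmark---the maximum expected revenue extractable from a strategic and an honest bidder, which Lemma~\ref{lem:optimal-rev} identifies as the revenue of $\moc$ under a correct prediction---over the worst-case revenue the mechanism earns when the prediction may be incorrect. Lemma~\ref{lem:optimal-rev} supplies the numerator as $\V\cdot(\chi_F+1)$, while Lemma~\ref{lem:rev-with-wrong-prediction} lower-bounds the denominator by $\V\cdot(\chi_F+\tfrac{1}{4e})$. Since a lower bound on the denominator yields an upper bound on the ratio, the robustness is at most
\[
\frac{\V\cdot(\chi_F+1)}{\V\cdot\left(\chi_F+\tfrac{1}{4e}\right)}=\frac{\chi_F+1}{\chi_F+\tfrac{1}{4e}},
\]
and it remains to maximize the right-hand side over the admissible range of $\chi_F$.

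Next I would pin down that range. Since the virtual valuation satisfies $\phi(z)=z-\tfrac{1-F(z)}{f(z)}\leq z$, its inverse obeys $\phi^{-1}(z)\geq z$ for every $z\geq 0$; therefore the integrand defining $\Aux$ is non-negative, whence $\chi_F=\Aux/\V\geq 0$. Thus the optimization is over $\chi_F\in[0,\infty)$.

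Finally, I would treat $g(x)=\tfrac{x+1}{x+1/(4e)}$ as a function of a single real variable $x\ge 0$ and observe that it is decreasing: the numerator of its derivative is $\tfrac{1}{4e}-1<0$. Consequently $g$ attains its maximum at the boundary point $x=0$, where $g(0)=\tfrac{1}{1/(4e)}=4e\approx 10.873$, which is precisely the claimed bound.

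The argument is essentially a one-variable calculation once the two lemmas are in hand, so I do not anticipate a genuine obstacle at this stage; the only points requiring care are confirming the feasibility constraint $\chi_F\ge 0$ and checking that $g$ decreases rather than increases, both of which are immediate. The conceptual heavy lifting---bounding the benchmark revenue and the worst-case revenue in terms of $\V$ and $\chi_F$---has already been carried out in Lemmas~\ref{lem:optimal-rev} and~\ref{lem:rev-with-wrong-prediction}, with the factor $4e$ tracing back to the $\tfrac{1}{e}$ loss from incorrect predictions in Lemma~\ref{lem:quantiles} and the $\Q\geq\V/4$ bound of Lemma~\ref{lem:r-q-vs-v}.
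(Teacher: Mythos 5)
Your proposal is correct and follows essentially the same route as the paper: it combines Lemma~\ref{lem:optimal-rev} and Lemma~\ref{lem:rev-with-wrong-prediction} into the ratio $\frac{\chi_F+1}{\chi_F+\frac{1}{4e}}$ and bounds it by $4e$ using $\chi_F\geq 0$. Your extra justification that $\chi_F\geq 0$ (via $\phi^{-1}(z)\geq z$) is a detail the paper leaves implicit but is consistent with its argument.
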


\begin{proof}
    We use the definition of robustness for mechanism $\moc$ as its expected revenue obtained when applied to a strategic and an honest bidder, divided by its expected revenue obtained when applied to a strategic bidder and a profiled agent with an incorrect prediction. By Lemmas~\ref{lem:optimal-rev} and~\ref{lem:rev-with-wrong-prediction}, we get that this ratio is at most $\frac{\chi_F+1}{\chi_F+\frac{1}{4e}}\leq 4e$. The inequality follows since $\chi_F\geq 0$.   
\end{proof}

\section{Achieving better robustness}\label{sec:ignore-prediction}
We now present an analysis of mechanism $\mpi$. Recall that mechanism $\mpi$ is the mechanism that extracts the maximum expected revenue among all mechanisms that ignore the prediction of the profiled agent and just use the information they have for the distribution $F$. We will bound the expected revenue of mechanism $\mpi$ by the expected revenue of the mechanism $M_1$ which also ignores the prediction of the profiled agent and is defined as follows. Mechanism $M_1$ selects a random sample $z$ from $F$ and offers the item to the strategic agent at a price of $\phi^{-1}(z)$. If the strategic agent does not buy the item, the mechanism offers the item to the profiled agent at the price $\phi^{-1}(0)$. 

For the sample $z$ drawn by the mechanism, the strategic agent buys the item with probability $1-F(\phi^{-1}(z))$, i.e., the probability her valuation is above the offered price of $\phi^{-1}(z)$. If the strategic agent does not buy the item (this happens with probability $F(\phi^{-1}(z))$), an optimal expected revenue of $\R$ is extracted from the profiled agent.

Denoting by $\Rone$ the expected revenue of mechanism $M_1$, we have
\begin{align}\label{eq:rone}
    \Rone &= \int_0^{\infty}{\left(\phi^{-1}(z)\cdot \left(1-F(\phi^{-1}(z))\right)+F(\phi^{-1}(z) \right)\cdot \R \cdot f(z) \ud{z}}.
\end{align}
The next two lemmas provide lower bounds on the expected revenue of mechanism $M_1$ and, consequently, $\mpi$.

\begin{lemma}\label{lem:m1-rev-1}
    The expected revenue of mechanism $M_1$ when applied to a stratefic bidder and a profiled agent who draw their valuations from the monotone hazard rate probability distribution $F$ is at least $\V\cdot \left(\chi_F+\frac{3}{4e}\right)$.
\end{lemma}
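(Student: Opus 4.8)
The plan is to exploit the fact that the closed form for $\Rone$ in Equation~(\ref{eq:rone}) splits into the revenue mechanism $\moc$ already extracts from the strategic bidder plus a genuinely new term contributed by the profiled agent. Reading off the integrand, I would first write
\begin{align*}
    \Rone = \int_0^{\infty}{\phi^{-1}(z)\left(1-F(\phi^{-1}(z))\right)f(z)\ud{z}} + \R\int_0^{\infty}{F(\phi^{-1}(z))f(z)\ud{z}},
\end{align*}
and observe that the first integral is \emph{exactly} the quantity $\Rs$ of Equation~(\ref{eq:rev-strategic}). Consequently the bound $\Rs\geq \V\cdot(\chi_F+\frac{1}{4e})$ established inside the proof of Lemma~\ref{lem:rev-with-wrong-prediction} already disposes of the first integral, and the whole task reduces to showing that the second term is at least $\frac{1}{2e}\cdot\V$, so that the two contributions combine to $\V\cdot(\chi_F+\frac{1}{4e}+\frac{2}{4e})=\V\cdot(\chi_F+\frac{3}{4e})$.

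For the second term I would bound the quantile integral $\int_0^{\infty}{F(\phi^{-1}(z))f(z)\ud{z}}$ from below using the elementary monotonicity fact $\phi^{-1}(z)\geq z$, which holds because $\phi(u)=u-1/h(u)\leq u$. Since $F$ is non-decreasing this gives $F(\phi^{-1}(z))\geq F(z)$ pointwise, and then the substitution $t=F(z)$ collapses the integral to $\int_0^{1}{t\ud{t}}=\frac{1}{2}$. Hence $\int_0^{\infty}{F(\phi^{-1}(z))f(z)\ud{z}}\geq \frac{1}{2}$, and multiplying by $\R$ and invoking $\R\geq \V/e$ from Lemma~\ref{lem:r-q-vs-v} yields $\R\int_0^{\infty}{F(\phi^{-1}(z))f(z)\ud{z}}\geq \frac{1}{2}\R\geq \frac{\V}{2e}$, which is precisely the increment $\frac{2}{4e}\cdot\V$ that we need.

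Putting the two pieces together finishes the argument. I do not anticipate a real obstacle: the decisive structural observation is that $\Rone$ literally contains $\Rs$ as a summand, so the work of Lemma~\ref{lem:rev-with-wrong-prediction} is reused verbatim, and the only new ingredient — the extra revenue from the profiled agent — is tamed by the clean bound $\int_0^{\infty}{F(\phi^{-1}(z))f(z)\ud{z}}\geq \frac{1}{2}$. The single point deserving mild care is verifying the arithmetic, namely that combining $\frac{1}{2}$ with $\R\geq \V/e$ produces exactly the $\frac{1}{2e}\cdot\V$ that upgrades the $\frac{1}{4e}$ term of Lemma~\ref{lem:rev-with-wrong-prediction} to the claimed $\frac{3}{4e}$.
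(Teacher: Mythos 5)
Your proof is correct and follows essentially the same route as the paper: the paper also splits $\Rone$ into $\Aux$ plus $\int_0^\infty z(1-F(\phi^{-1}(z)))f(z)\ud{z}$ plus $\R\int_0^\infty F(\phi^{-1}(z))f(z)\ud{z}$, bounds the middle piece by $\frac{1}{e}\Q\geq\frac{1}{4e}\V$ via Lemma~\ref{lem:quantiles}, and bounds the last piece by $\frac{1}{2}\R\geq\frac{1}{2e}\V$ exactly as you do. The only cosmetic difference is that you recognize the first two pieces as the quantity $\Rs$ and reuse the bound $\Rs\geq\V\bigl(\chi_F+\frac{1}{4e}\bigr)$ from Lemma~\ref{lem:rev-with-wrong-prediction} rather than re-deriving it inline.
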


\begin{proof}
Working with the definition of $\Rone$ in Equation~(\ref{eq:rone}), we have
\begin{align*}
    \Rone &= \int_0^{\infty}{\left(\phi^{-1}(z)\cdot \left(1-F(\phi^{-1}(z))\right)\right.}\\
    &\quad\quad\quad\quad{\left.+F(\phi^{-1}(z) \right) \cdot \R\cdot f(z) \ud{z}}\\
    &=\int_0^{\infty}{\left(\phi^{-1}(z)-z\right)\cdot \left(1-F(\phi^{-1}(z))\right)\cdot f(z) \ud{z}}+\int_0^{\infty}{z\cdot \left(1-F(\phi^{-1}(z))\right)\cdot f(z)\ud{z}}\\
    &\quad\,+\R\cdot \int_0^{\infty}{F(\phi^{-1}(z))\cdot f(z)\ud{z}}\\
    &\geq \Aux + \frac{1}{e}\cdot \int_0^{\infty}{z\cdot (1-F(z))\cdot f(z)\ud{z}}+\R\cdot \int_0^{\infty}{F(z)\cdot f(z)\ud{z}}\\
    &=\Aux+\frac{1}{e}\cdot \Q+\frac{1}{2}\cdot \R\geq \Aux+\frac{3}{4e}\cdot \V=\V\cdot \left(\chi_F+\frac{3}{4e}\right).
\end{align*}
as desired. The first inequality follows by the definition of $\Aux$, Lemma~\ref{lem:quantiles}, and the fact $\phi^{-1}(z)\geq z$. The two last equalities follow by the definitions of quantities $\Q$, $\R$, and $\chi_F$. The second inequality follows by Lemma~\ref{lem:r-q-vs-v}. \end{proof}

\begin{lemma}\label{lem:m1-rev-2}
    The expected revenue of mechanism $M_1$ when applied to a stratefic bidder and a profiled agent who draw their valuations from the monotone hazard rate probability distribution $F$ is at least $\V\cdot \left(\frac{1}{e}\cdot \chi_F+\frac{5}{4e}-\frac{1}{2e^2}\right)$.
\end{lemma}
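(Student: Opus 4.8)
The plan is to start from the closed form of $\Rone$ in Equation~(\ref{eq:rone}) but, unlike in Lemma~\ref{lem:m1-rev-1}, to refuse to keep the term $\Aux$ with coefficient $1$; instead I will trade it for a factor $1/e$ in exchange for a strictly larger additive constant. The device that makes this possible is to pull the reserve revenue $\R$ out of the honest term by writing $F(\phi^{-1}(z))=1-(1-F(\phi^{-1}(z)))$, which collapses the integrand into a single product:
\begin{align*}
\Rone &= \R + \int_0^{\infty}{\left(\phi^{-1}(z)-\R\right)\cdot\left(1-F(\phi^{-1}(z))\right)\cdot f(z)\ud{z}}.
\end{align*}

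The key observation — and the step I expect to be the crux — is that the factor $\phi^{-1}(z)-\R$ is non-negative for every $z\ge 0$. Indeed, $\phi^{-1}$ is non-decreasing so $\phi^{-1}(z)\ge \phi^{-1}(0)$, and $\R=\phi^{-1}(0)\cdot(1-F(\phi^{-1}(0)))\le \phi^{-1}(0)$ because the second factor lies in $[0,1]$; hence $\phi^{-1}(z)\ge \R$ throughout the domain of integration. This uniform sign is exactly what lets me apply Lemma~\ref{lem:quantiles} underneath the integral: since $1-F(\phi^{-1}(z))\ge \frac{1}{e}(1-F(z))$ and the multiplier is non-negative, I obtain
\begin{align*}
\Rone &\ge \R + \frac{1}{e}\int_0^{\infty}{\left(\phi^{-1}(z)-\R\right)\cdot\left(1-F(z)\right)\cdot f(z)\ud{z}}.
\end{align*}

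To finish, I would expand this integral. The $\R$-part contributes $-\frac{\R}{e}\int_0^\infty(1-F(z))f(z)\ud{z}=-\frac{\R}{2e}$, which combines with the leading $\R$ into $\R\cdot(1-\frac{1}{2e})$. In what remains I split $\phi^{-1}(z)=(\phi^{-1}(z)-z)+z$; the first piece is at least $\Aux$ since $1-F(z)\ge 1-F(\phi^{-1}(z))$ for $z\ge 0$, while the second is exactly $\Q$. This yields $\Rone\ge \R\cdot(1-\frac{1}{2e})+\frac{1}{e}(\Aux+\Q)$. Plugging in the two bounds of Lemma~\ref{lem:r-q-vs-v}, namely $\R\ge \V/e$ and $\Q\ge \V/4$, and recalling $\Aux=\chi_F\cdot\V$, the constant terms collect as $\frac{\V}{e}-\frac{\V}{2e^2}+\frac{\V}{4e}=\frac{5\V}{4e}-\frac{\V}{2e^2}$, which is exactly the claimed bound $\V\cdot\left(\frac{1}{e}\chi_F+\frac{5}{4e}-\frac{1}{2e^2}\right)$.

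It is worth flagging why the naive route does not reach this constant. If one instead bounds the honest term directly by $\R\int_0^\infty F(\phi^{-1}(z))f(z)\ud{z}\ge \frac{1}{2}\R$, as in Lemma~\ref{lem:m1-rev-1}, and only afterwards invokes $\R\ge \V/e$, the additive constant saturates at $\frac{3}{4e}$, which is insufficient. The improvement comes precisely from treating the reserve revenue jointly with the virtual-value gap $\phi^{-1}(z)-\R$, so that after the application of Lemma~\ref{lem:quantiles} the full weight $1-\frac{1}{2e}$ of the $\R$-mass survives rather than only $\frac{1}{2}$; the price paid is the reduced coefficient $\frac{1}{e}$ on $\chi_F$, which is what makes this estimate the sharper one precisely in the small-$\chi_F$ regime used later in the trade-off.
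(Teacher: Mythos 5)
Your proof is correct and is essentially the same as the paper's: your identity $\R+(\phi^{-1}(z)-\R)(1-F(\phi^{-1}(z)))$ is algebraically identical to the paper's $\phi^{-1}(z)-F(\phi^{-1}(z))\cdot(\phi^{-1}(z)-\R)$, and both proofs then apply Lemma~\ref{lem:quantiles} to the non-negative factor $\phi^{-1}(z)-\R$, use $\phi^{-1}(z)\geq z$ to recover $\Aux$, and finish with $\int_0^\infty F(z)f(z)\ud{z}=1/2$ and Lemma~\ref{lem:r-q-vs-v}. No gaps.
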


\begin{proof}
Notice that, due to the regularity of distribution $F$,  $\phi^{-1}(z)$ is non-decreasing in $z$. Then, it holds that
$\phi^{-1}(z)\geq \phi^{-1}(0)\geq \R$ and, using Lemma~\ref{lem:quantiles}, we get
\begin{align}\nonumber
    &\phi^{-1}(z)\cdot \left(1-F(\phi^{-1}(z))\right)+F(\phi^{-1}(z))\cdot \R\\\nonumber
    &=\phi^{-1}(z)-F(\phi^{-1}(z))\cdot (\phi^{-1}(z)-\R)\\\nonumber
    &\geq \phi^{-1}(z)-\left(1-\frac{1}{e}\cdot (1-F(z))\right)\cdot (\phi^{-1}(z)-\R)\\\nonumber
    &= \frac{1}{e}\cdot \left(\phi^{-1}(z)-z\right)\cdot (1-F(z)) +\frac{1}{e}\cdot z\cdot (1-F(z))+ \R\cdot \left(1-\frac{1}{e}+\frac{1}{e}\cdot F(z)\right)\\\label{eq:per-term-bound}
    &\geq \frac{1}{e}\cdot \left(\phi^{-1}(z)-z\right)\cdot (1-F(\phi^{-1}(z)))+\frac{1}{e}\cdot z\cdot (1-F(z))+ \R\cdot \left(1-\frac{1}{e}+\frac{1}{e}\cdot F(z)\right).
\end{align}
The last inequality is true since $\phi^{-1}(z)\geq z$. 
Working again with the definition of $\Rone$ from Equation (\ref{eq:rone}) and using inequality (\ref{eq:per-term-bound}), the definitions of quantities $\Aux$, $\Q$, and $\chi_F$, and Lemma~\ref{lem:r-q-vs-v}, we have
\begin{align*}
    \Rone &= \int_0^{\infty}{\left(\left(\phi^{-1}(z)\cdot \left(1-F(\phi^{-1}(z))\right)+F(\phi^{-1}(z)\right)\cdot \R\right)\cdot f(z) \ud{z}}\\
    &\geq \frac{1}{e}\cdot \int_0^{\infty}{\left(\phi^{-1}(z)-z\right)\cdot (1-F(\phi^{-1}(z)))\cdot f(z)\ud{z}}\\ &\quad\,+\frac{1}{e}\cdot \int_0^{\infty}{z\cdot (1-F(z))\cdot f(z)\ud{z}}+ \R\cdot \left(1-\frac{1}{e}+\frac{1}{e}\cdot \int_0^{\infty}{F(z)\cdot f(z)\ud{z}}\right)\\
    &= \frac{1}{e}\cdot \Aux +\frac{1}{e}\cdot \Q + \left(1-\frac{1}{2e}\right)\cdot \R \\
    &\geq \frac{1}{e}\cdot\Aux + \left(\frac{5}{4e}-\frac{1}{2e^2}\right)\cdot \V\\
    &=\V\cdot \left(\frac{1}{e}\cdot \chi_F+\frac{5}{4e}-\frac{1}{2e^2}\right),
\end{align*}
as desired. In the second equality, we have used the fact $\int_0^{\infty}{F(z)\cdot f(z)\ud{z}}=1/2$.
\end{proof}

By combining the two lemmas above and recalling that the expected revenue of mechanism $M_1$ lower-bounds that of mechanism $\mpi$, we obtain the following guarantee.

\begin{theorem}\label{thm:robustness-of-m2}
The robustness of mechanism $\mpi$ when applied on a strategic bidder and a profiled agent (with a possibly incorrect prediction) who draw their valuations from the monotone hazard rate probability distribution $F$ is at most $\frac{4e+2}{5}\approx 2.575$.
\end{theorem}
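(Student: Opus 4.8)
The plan is to combine the exact value of the revenue benchmark with the two lower bounds already established for mechanism $M_1$. By Lemma~\ref{lem:optimal-rev}, the benchmark — the optimal expected revenue extractable from a strategic and an honest bidder — equals $\V\cdot(\chi_F+1)$. Since $\mpi$ is revenue-optimal among all prediction-ignoring mechanisms, its expected revenue is at least $\Rone$, which by Lemmas~\ref{lem:m1-rev-1} and~\ref{lem:m1-rev-2} is bounded below by \emph{both} $\V\cdot(\chi_F+\tfrac{3}{4e})$ and $\V\cdot(\tfrac{1}{e}\chi_F+\tfrac{5}{4e}-\tfrac{1}{2e^2})$. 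Writing $t=\chi_F\ge 0$ and cancelling the common factor $\V$, the robustness is therefore at most
\[
g(t)=\frac{t+1}{\max\{A(t),\,B(t)\}},\qquad A(t)=t+\tfrac{3}{4e},\quad B(t)=\tfrac{1}{e}\,t+\tfrac{5}{4e}-\tfrac{1}{2e^2},
\]
and it remains to verify that $g(t)\le \tfrac{4e+2}{5}$ for every $t\ge 0$.

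Next I would locate the crossover of the two linear lower bounds. The line $A$ has slope $1$ while $B$ has slope $1/e<1$, and a quick check gives $B(0)>A(0)$; hence $B$ dominates for small $t$ and $A$ for large $t$, and solving $A(t)=B(t)$ yields the unique crossing point $t^{\ast}=\tfrac{1}{2e}$. Consequently the denominator of $g$ equals $B(t)$ on $[0,t^{\ast}]$ and $A(t)$ on $[t^{\ast},\infty)$, so I would bound $g$ piecewise on these two intervals.

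The key observation is that each piece is monotone toward the crossover, so the maximum of $g$ sits exactly at $t^{\ast}$. On $[t^{\ast},\infty)$ we have $g(t)=\tfrac{t+1}{t+3/(4e)}=1+\tfrac{1-3/(4e)}{t+3/(4e)}$ with $3/(4e)<1$, hence $g$ is strictly decreasing there. On $[0,t^{\ast}]$ the function $g(t)=\tfrac{t+1}{B(t)}$ has derivative whose sign is that of $B(0)-\tfrac{1}{e}=\tfrac{5}{4e}-\tfrac{1}{2e^2}-\tfrac{1}{e}=\tfrac{1}{4e}\bigl(1-\tfrac{2}{e}\bigr)$, which is positive because $e>2$, so $g$ is increasing there. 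Both pieces thus peak at $t^{\ast}=\tfrac{1}{2e}$, where $A(t^{\ast})=B(t^{\ast})=\tfrac{5}{4e}$ and
\[
g(t^{\ast})=\frac{1/(2e)+1}{5/(4e)}=\frac{4e+2}{5},
\]
which is the claimed bound (and is in fact attained at $\chi_F=\tfrac{1}{2e}$).

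The main subtlety is that neither lower bound alone suffices across the whole range of $\chi_F$: Lemma~\ref{lem:m1-rev-1} is the binding constraint for large $\chi_F$, where the ratio tends to $1$, whereas Lemma~\ref{lem:m1-rev-2} is the stronger estimate for small $\chi_F$, and the worst case for robustness occurs precisely where the two coincide. Identifying this crossover and checking the (elementary but oppositely-signed) monotonicity of $g$ on either side of it is the crux of the argument; the remainder is just evaluating $g$ at $t^{\ast}=\tfrac{1}{2e}$.
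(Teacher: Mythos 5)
Your proposal is correct and follows essentially the same route as the paper: take the benchmark $\V(\chi_F+1)$ from Lemma~\ref{lem:optimal-rev}, lower-bound the revenue of $\mpi$ by that of $M_1$ via Lemmas~\ref{lem:m1-rev-1} and~\ref{lem:m1-rev-2}, and observe that the worst case over $\chi_F$ occurs where the two linear lower bounds cross, at $\chi_F=\tfrac{1}{2e}$, giving $\tfrac{4e+2}{5}$. Your piecewise monotonicity check is just a more explicit writing of the paper's remark that one ratio is decreasing and the other increasing in $\chi_F$.
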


\begin{proof}
    Recall that, by Lemma~\ref{lem:optimal-rev}, the optimal expected revenue achieved by mechanism $\moc$ when applied on a strategic and an honest bidder who draw their valuations from distribution $F$ is $\V\cdot (\chi_F+1)$. Using the revenue guarantees from Lemmas~\ref{lem:m1-rev-1} and~\ref{lem:m1-rev-2}, we get a robustness of
    \begin{align*}
    \max\left\{\frac{\chi_F+1}{\chi_F+\frac{3}{4e}}, \frac{\chi_F+1}{\frac{1}{e}\cdot \chi_F+\frac{5}{4e}-\frac{1}{2e^2}}\right\}&=\frac{4e+2}{5}    
    \end{align*}
    for mechanism $M_1$, completing the proof. Notice that the two expressions inside the $\max$ are decreasing and increasing in $\chi_F$, respectively. Hence, the maximum among the two values is obtained when the two expressions become equal, i.e., for $\chi_F=\frac{1}{2e}$.
\end{proof}

\section{Robustness trade-offs}\label{sec:tradeoff}
Inspecting carefully the proof of Theorem~\ref{thm:robustness-of-m-star}, we get that the robustness of mechanism $\moc$ is $\frac{\chi_F+1}{\chi_F+\frac{1}{4e}}$. This quantity takes its worst-case value of $4e\approx 10.873$ for $\chi_F=0$ and is decreasing in $\chi_F$. In contrast, inspecting the proof of Theorem~\ref{thm:robustness-of-m2}, we find that the robustness of mechanism $\mpi$ is at most $\frac{\chi_F+1}{\frac{1}{e}\cdot \chi_F+\frac{5}{4e}-\frac{1}{2e^2}}$ for $\chi_F\in \left[0,\frac{1}{2e}\right]$. This expression is increasing in $\chi_F$ from the value of $\frac{4e^2}{5e-2}\approx 2.55$ to the worst-case value of $\frac{4e+2}{5}\approx 2.575$ for $\chi_F=\frac{1}{2e}$. 

In the following, we present better bounds on the Pareto frontier defined by the robustness bounds of the two mechanisms. For mechanism $\mpi$, we can prove better robustness bounds by bounding its expected revenue by that of the mechanism $M_2$, defined as follows. Mechanism $M_2$ selects a random sample $z$ from $F$ and offers the item to the strategic agent at a price of $z$. If the strategic agent does not buy the item, the mechanism offers the item to the profiled agent at the price $\phi^{-1}(0)$.

Denoting by $\Rtwo$ the expected revenue of mechanism $M_2$, we have
\begin{align}\nonumber
\Rtwo &= \int_0^{\infty}{\left(z\cdot (1-F(z))+F(z)\cdot \R\right)\cdot f(z)\ud{z}}=\Q+\R\cdot \int_0^{\infty}{F(z)\cdot f(z)\ud{z}}\\\label{eq:rtwo}
&=\Q+\R/2\geq \left(\frac{1}{4}+\frac{1}{2e}\right)\cdot \V.
\end{align}
The inequality follows by Lemma~\ref{lem:r-q-vs-v}. Using (\ref{eq:rtwo}), the robustness of mechanism $M_2$ becomes $\frac{\chi_F+1}{\frac{1}{4}+\frac{1}{2e}}$, which is also increasing in $\chi_F$, taking its smallest value of $\frac{4e}{e+2}\approx 2.304$ for $\chi_F=0$. The robustness bound is better than that obtained for mechanism $M_1$ for $\chi_F\in \left[0,\frac{e}{4}-\frac{3}{4}+\frac{1}{2e}\right]$. 

The Pareto frontier defined by the robustness bounds of the two mechanisms $\moc$ and $\mpi$ is depicted in Figure~\ref{fig:pareto-frontier}. The robustness values for mechanism $\mpi$ are the best among those proved for mechanisms $M_1$ and $M_2$.

\begin{figure}[h]
    \centering
    \includegraphics[width=0.6\textwidth]{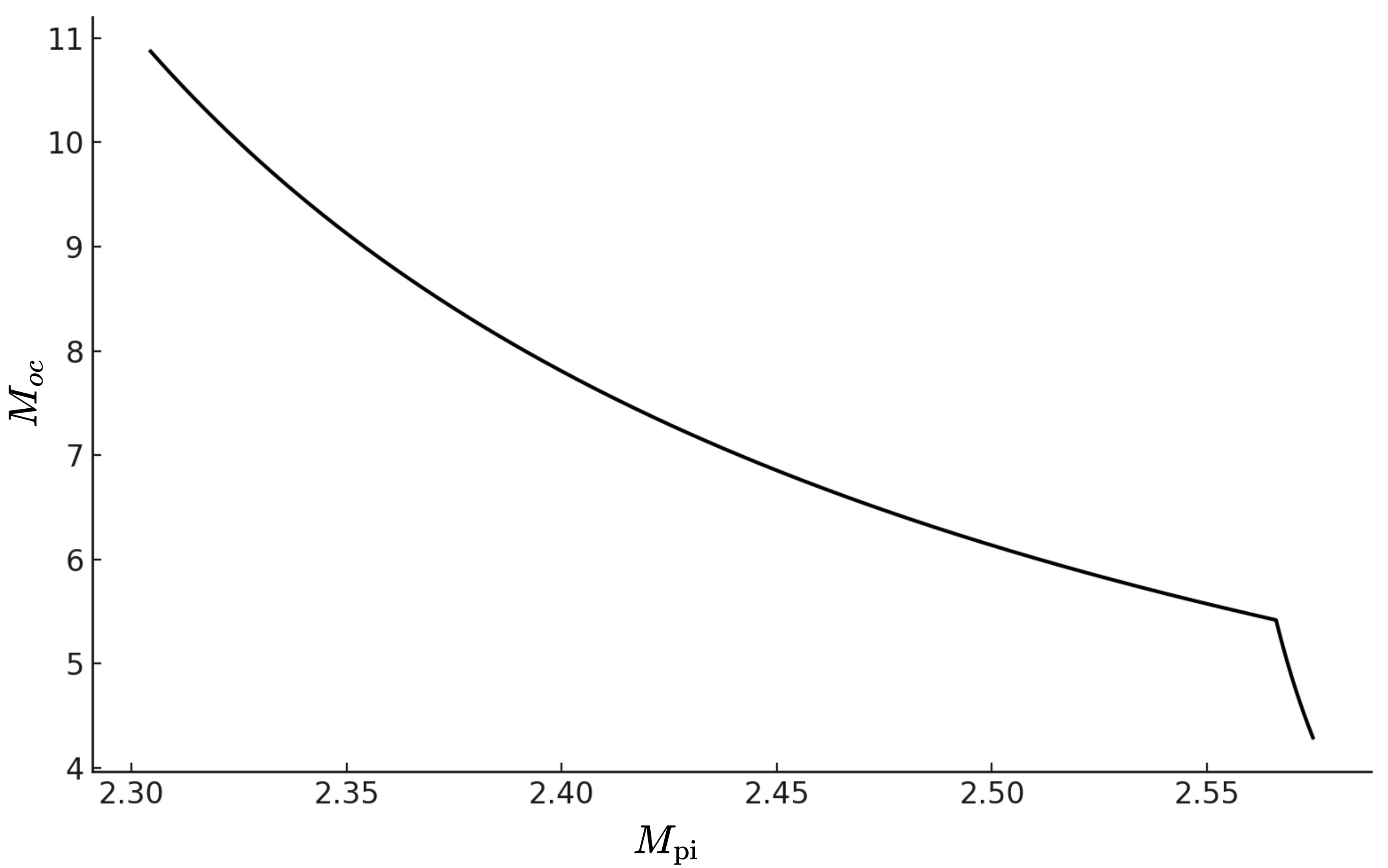}
    \caption{The Pareto frontier defined by the robustness bounds of the two mechanisms $\moc$ and $\mpi$ for $\chi_F\in \left[0,\frac{1}{2e}\right]$. For mechanism $\moc$, we use the robustness bound $\frac{\chi_F+1}{\chi_F+\frac{1}{4e}}$ from the proof of Theorem~\ref{thm:robustness-of-m-star}. For mechanism $\mpi$, we use the combination $\min\left\{\frac{\chi_F+1}{\frac{1}{e}\cdot \chi_F+\frac{5}{4e}-\frac{1}{2e^2}}, \frac{\chi_F+1}{\frac{1}{4}+\frac{1}{2e}}\right\}$ of the robustness bounds for mechanism $M_1$ (from the proof of Theorem~\ref{thm:robustness-of-m2}) and $M_2$ (by the discussion in this section).}
    \label{fig:pareto-frontier}
\end{figure}

\section{Discussion}\label{sec:open}
We have initiated the study of auction design with predictions in a Bayesian setting. We have kept our setting as simple as possible with a strategic bidder and a profiled agent, for whom a possibly incorrect prediction is available. 

Our robustness bounds for mechanisms $\moc$ and $\mpi$ are not known to be tight. For the exponential probability distribution with $F(z)=1-\exp(-\lambda\cdot z)$ with $\lambda>0$, Equations (\ref{eq:rev-strategic}) and (\ref{eq:rev-honest}) yield a robustness of $\frac{\Rs+\Rh}{\Rs}=\frac{4e+1}{3}\approx 4.291$ for mechanism $\moc$. Using these equations and calculating the expected revenue of mechanism $\mpi$ using (\ref{eq:rev-of-mpi}), we get a robustness of $\frac{e+1/2}{1+e^{-1/e}}\approx 1.902$. These are the only lower bounds we have for the worst-case robustness of $\moc$ and $\mpi$ among all MHR distributions. Closing the gaps between $10.873$ and $4.291$ and between $2.575$ and $1.902$ and determining the tight robustness bounds seems to require better analysis and tighter bounds on the inverse virtual valuation $\phi^{-1}(z)$.

One may wonder why we have restricted our attention to monotone hazard rate distributions and do not consider more general, e.g., regular, ones. The next statement shows that this restriction is necessary.

\begin{theorem}
The robustness of any mechanism can be unbounded when valuations are drawn from non-MHR regular probability distributions.
\end{theorem}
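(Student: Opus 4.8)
The plan is to exhibit a one-parameter family of regular, non-MHR distributions for which the revenue benchmark grows without bound, while the revenue extracted by \emph{any} mechanism under a suitably chosen incorrect prediction stays bounded by an absolute constant. Since robustness is the ratio of these two quantities, it then diverges along the family, so no mechanism can attain bounded robustness over the class of non-MHR regular distributions.

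First I would fix the (truncated) equal-revenue distribution: $F(z)=1-1/z$ for $z\in[1,B]$, with the remaining mass $1/B$ placed as an atom at $B$ (or, to sidestep the atom, a smooth rapidly-decaying continuation past $B$). Its hazard rate is $h(z)=1/z$, which is decreasing, so $F$ is non-MHR; yet its virtual valuation satisfies $\phi(z)\equiv 0$ on $[1,B)$, so $F$ is regular. Two quantities drive the argument. On one hand, the optimal single-agent revenue is $\R=\max_z z\,(1-F(z))=1$, bounded independently of $B$. On the other hand, the expected value is $\V=\int_1^B z\cdot z^{-2}\ud z+1=\ln B+1$, which tends to infinity with $B$. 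Since $\chi_F=\Aux/\V\geq 0$ by $\phi^{-1}(z)\geq z$, Lemma~\ref{lem:optimal-rev} gives a benchmark of $\V\cdot(\chi_F+1)\geq \V=\ln B+1$.

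Next I would bound the revenue of an arbitrary mechanism under an incorrect prediction. The key choice is to let the prediction $\hatv$ be drawn from $F$ \emph{independently} of the profiled agent's true valuation $v_h$; since $F$ is continuous on $[1,B)$, this prediction is incorrect almost surely, so it is a legitimate worst case. The revenue splits into a strategic part and a profiled part. For the strategic part, truthfulness forces $x_s(\cdot,\hatv)$ to be a threshold rule for each fixed $\hatv$, so the expected payment of the strategic bidder is at most $\max_t t\,(1-F(t))=\R=1$. For the profiled part, the price $p(b_s,\hatv)$ offered to the profiled agent is a function of $(b_s,\hatv)$ only, all of which are independent of $v_h$; hence, conditioned on $(b_s,\hatv)$, the profiled revenue equals $p\cdot(1-F(p))\leq \R=1$, and the same bound survives taking expectations. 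Thus any mechanism collects at most $2\R=2$ in total.

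Combining the two bounds, the robustness of every mechanism on this instance is at least $(\ln B+1)/2$, which grows without bound as $B\to\infty$, proving the claim. The step I expect to be the main obstacle is the upper bound on the profiled-agent revenue: it hinges on recognizing that an independent—and therefore uninformative—incorrect prediction reduces the mechanism's leverage over the profiled agent to plain monopoly pricing, capping that revenue at $\R$ no matter how cleverly the price is tied to $b_s$ and $\hatv$. A secondary technical point is confirming regularity at the truncation, either through the standard treatment of the top atom of the equal-revenue distribution or by replacing it with a smooth tail that affects neither $\V$ nor $\R$ asymptotically.
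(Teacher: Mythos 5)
Your proposal is correct and follows essentially the same strategy as the paper: exhibit a regular, non-MHR family with bounded monopoly revenue $\R$ but unbounded expected value $\V$, upper-bound any mechanism's revenue under an adversarial (independent) prediction by $2\R$, and lower-bound the benchmark by $\V$. The only differences are cosmetic --- you use a truncated equal-revenue distribution where the paper uses the Pareto family $F(z)=1-(z+1)^{-c}$ with $c\to 1^{+}$, and you spell out the $2\R$ upper bound (threshold rule for the strategic bidder, posted price independent of $v_h$ for the profiled agent) that the paper merely asserts.
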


\begin{proof}
We will prove the theorem for the probability distribution that returns non-negative values with cumulative distribution function $F(z) = 1 - \frac{1}{(z + 1)^c}$ and density $f(z) = \frac{c}{(z+1)^{c+1}}$, for $c>1$. We can calculate the virtual valuation as
\begin{align*}
\phi(z) &= z - \frac{z + 1}{c} = z\cdot\bigg(1 - \frac{1}{c}\bigg) - \frac{1}{c}.
\end{align*}
Since $c>1$, the virtual valuation is increasing in $z$ and, thus, the distribution $F$ is regular. It is not MHR, though, since the hazard rate $\frac{f(z)}{1-F(z)}$ is equal to $\frac{c}{z+1}$, i.e., decreasing in $z$. Moving on, the inverse virtual valuation function is
\begin{align*}
\phi^{-1}(z) &= \frac{z + \frac{1}{c}}{1 - \frac{1}{c}} = \frac{z\cdot c + 1}{c - 1}.
\end{align*}
Hence,
\begin{align*}
F(\phi^{-1}(z)) &= 1 - \frac{1}{\left(\frac{z\cdot c + 1}{c - 1} + 1\right)^c}= 1 -\left(\frac{c-1}{c}\right)^c \cdot \frac{1}{(z+1)^c}
\end{align*}
Hence, we have that the maximum expected revenue we can get by a mechanism that cannot trust the prediction is at most
\begin{align*}
    2\cdot \R &= 2\cdot \phi^{-1}(0)\cdot \left(1-F(\phi^{-1}(0))\right)=\frac{2}{c}\cdot \left(\frac{c-1}{c}\right)^{c-1} \leq 2.
\end{align*}
We also have that the optimal expected revenue that can be extracted by a strategic bidder and an honest agent who both draw their valuations according to $F$ is
\begin{align*}
    \Rs+\Rh &=\Aux+\V \geq \V=\int_0^{\infty}{(1-F(z))\ud{z}}=\int_0^{\infty}{\frac{\ud{z}}{(z+1)^c}}=\frac{1}{c-1}.
\end{align*}
Hence, as $c$ approaches $1$, the optimal expected revenue goes to infinity and so does the robustness of any mechanism.
\end{proof}

Besides the obvious extension of the setting to more agents, we would like to understand whether better mechanisms (in terms of revenue consistency and robustness) can be designed when the prediction error is bounded by a known parameter.

\bibliographystyle{plainnat}
\bibliography{profiled}

\end{document}